\newcommand\mc{\mathcal}
\newcommand{\cont}[0]{\mathcal{C}}
\begin{document}
\title{
A Compositional Approach to
Diagnosing Faults in Cyber-Physical Systems
}
%
%
\author{Josefine B. Graebener\inst{1}\orcidlink{0000-0002-1376-0741} \and
Inigo Incer\inst{2}\orcidlink{0000-0001-7933-692X} \and
Richard M. Murray\inst{1}\orcidlink{0000-0002-5785-7481}}

%
%
\institute{University of Michigan, Ann Arbor, MI 48109 USA
\\ \email{iir@umich.edu}\\
}

\authorrunning{J.B. Graebener, I. Incer, and R.M. Murray}
%
\institute{California Institute of Technology, Pasadena CA 91125, USA 
\email{jgraeben@caltech.edu, murray@cds.caltech.edu}
\and
University of Michigan, Ann Arbor MI 48109, USA
\email{iir@umich.edu}\\}
\maketitle              
\begin{abstract}
Identifying the cause of a system-level failure in a cyber-physical system (CPS) can be like tracing a needle in a haystack.
This paper approaches the problem by assuming that the CPS has been designed compositionally
and that each component in the system is associated with an assume-guarantee contract.
We exploit recent advances in contract-based design that 
show how to compute the contract for the entire system using the component-level contracts.
When presented with a system-level failure, our approach is able to efficiently identify the components that are responsible for the system-level failure together with the specific predicates in those components' specifications
that are involved in the fault.
We implemented this approach using Pacti and demonstrate it through illustrative examples inspired by an autonomous vehicle in the DARPA urban challenge.

\keywords{assume-guarantee contracts \and diagnostics.}
\end{abstract}

\newcommand{\revise}[1]{{\color{black} #1}}

\section{Introduction}
Any safety-critical system requires efficient detection of system abnormalities and faults to avoid dangerous situations and safety hazards.
The rising adoption of autonomy has only increased the need for such diagnostics~\cite{chen2012robust}. Faults are defined as deviations from the correct, expected system behavior, observable in at least one system property or parameter~\cite{van1997remarks}.
Diagnostics refers to the process of identifying, analyzing, and resolving problems that become apparent during the operation of a system, product, or process.
Diagnosing the cause of a system-level failure in a complex CPS can be a difficult process.

Fault diagnosis consists of three areas: detection, isolation, and identification~\cite{gao2015survey}. Detection refers to identifying when and where a fault occurs from the observable system output. Isolation considers the location of the fault, and identification refers to finding the type, shape, and size of the fault. There are four main fault diagnosis methods: model-based, signal-based, knowledge-based, hybrid, and active. In 1971, Beard introduced model-based fault diagnosis with the intent to replace hardware redundancy by analytical redundancy~\cite{beard1971failure}. Model-based fault diagnosis uses different techniques to monitor the actual system outputs and compare them to the predicted values.
Signal-based fault diagnosis uses measured signals instead of input-output models, and extracts features (or patterns) from a signal to make a diagnostic decision~\cite{gao2015survey}.
Knowledge-based fault diagnosis consists of a knowledge base and an inference engine~\cite{chi2022knowledge}. Hybrid approaches are a combination of the above-mentioned methods~\cite{gao2015survey}. Active fault detection is concerned with designing auxiliary input vectors to reveal faults~\cite{niemann2006setup}.

Diagnostics has been studied extensively in computer science and engineering. Some early and influential works include~\cite{de1987diagnosing} and ~\cite{reiter1987theory}. In~\cite{de1987diagnosing},
a model-based approach to diagnose faults in complex systems by observing symptoms and using reasoning techniques was introduced in 1987 by De Kleer and Williams. In 1987, Reiter developed a formal logical framework to diagnose faults consisting of three main components: a knowledge base, an observation base, and a set of inference rules~\cite{reiter1987theory}.
In formal methods, the problem of explaining why for certain robot specifications no implementing control strategy exists has been studied in~\cite{raman2012explaining}, while `repairing' specifications has been studied in~\cite{boteanu2017robot}. In~\cite{mallozzi2023contract}, assume-guarantee contract operators have been used for specification repair. Recently, TRACE, a tool for requirements analysis was introduced in~\cite{varanasi2025trace}, which uses SMT solvers to analyze system guarantees.

This paper presents a diagnostics approach that utilizes assume-guarantee reasoning and leverages the syntax of specifications to facilitate tracing the causes of violated system-level guarantees to potential subsystems.
The use of contracts in diagnostics simplifies the attribution of blame.
For example, suppose that we have a trace for a component. If this trace violates the assumption of the component, the component cannot be blamed for any undesired behavior. On the other hand, under satisfied assumptions, the behavior has to satisfy the promised guarantees. If the component does not deliver its guarantees in this case, then it did not satisfy its specification. Now we can further analyze the component and determine whether the implementation was faulty or if anything was missed when defining the specification.

\vspace{-2mm}
\paragraph{Problem definition.}
Suppose we implement a system with $n$ interconnected components having contracts $\cont_i = (\bigwedge_j a^i_j, \bigwedge_j g^i_j)$ for $i \in \{1, \ldots, n\}$, where $a^i_j$ and $g^i_j$ represent the $j$-th assumption and guarantee, respectively, of component $i
$.
Suppose that the specification of the entire system 
is given by
$\cont_S = (\bigwedge_j a_j, \bigwedge_j g_j)$ and that
we have access to a log file \texttt{Log} containing values for a subset of system variables such that the assumptions of $\cont_S$ are satisfied, but there is at least one guarantee
of $\cont_S$ that is violated.
{This paper introduces a technique
to identify both the contract $\cont_k$ and its specific assumptions and guarantees that were responsible for the system-level fault}. The technique can indicate exactly the predicates that need to be evaluated to diagnose the failure, as opposed to checking every single assumption and guarantee of all components in the system.
While the application of contracts in runtime verification and diagnostics has been pursued---see, e.g., ~\cite{10.1007/978-3-031-77382-2_5,10.1007/978-3-030-32079-9_10,10.1007/3-540-36577-X_24,10.1007/978-3-031-77382-2_25,10.1007/978-3-030-60508-7_1}---our work is the first to 
exploit the explicit computation of the contract operation of composition in order to determine not only what component in the system is the likely cause of the system-level fault, but also the specific predicates in the component's contract that are responsible for the fault.

\revise{
Our problem setup assumes that we have a log file witnessing a system-level failure and contracts for each component. We may ask, why should not we simply check every assumption and guarantee of all components? 
We believe that the targeted approach discussed in this paper is preferable for various reasons. First, 
the log file can contain several errors that are not related to the
system-level issue
we are interested in debugging. Lacking the means to pinpoint exactly the predicates that are involved in the system-level failure
can lead designers to long debugging campaigns of unrelated, but tempting, issues, resulting in distractions and loss of time, which may be costly in time-sensitive projects or when deadlines approach.
Second,
suppose there is a specific
system-level guarantee we want to monitor.
The methods discussed in this paper can locate exactly what predicates should be monitored in the system to ascribe blame to a component for the violation of the property we are tracking. Knowledge of these internal predicates can be used to instrument the system to monitor the desired predicates when running a second test, i.e., this methodology can be used in designing test campaigns.

}


\vspace{-2mm}
\paragraph{Contributions.}
We propose a diagnostics methodology based on contracts that enables a systematic search over system variables to trace violated system guarantees back to the responsible component.
Our approach reduces the number of predicates and components that need to be evaluated during fault localization by exploiting the explicit computation of the composition operation of contracts. We implement this methodology using Pacti~\cite{incer2025pacti}, a tool for compositional reasoning over assume-guarantee contracts. Finally, we demonstrate the effectiveness of our approach through illustrative examples and case studies inspired by autonomous vehicle behavior in the DARPA Urban Challenge~\cite{dutoitsensing}. A preliminary version of this work appeared in Chapter 5 of~\cite{graebener2024formal}.


\section{Background}
The framework presented in this paper is based on contract-based design, first introduced as a design methodology for modular software systems~\cite{dijkstra1975guarded,lamport1990win,meyer1992applying} and later extended to cyber-physical systems~\cite{nuzzo2015platform,vincentelli2012taming}.
\begin{defi}[Assume-Guarantee Contract~\cite{Benveniste2008,incer2022thesis}]
Let $T$ be the language used to express specifications in our system. We assume this language has Boolean semantics.
A \emph{contract} is a pair $\cont = (a, g) \in T^2$, where $a$ are the assumptions, and $g$ the guarantees.
A model $E \models a$ is said to be an \emph{environment} of the contract $\cont$.
A model $M \models a \to g$ is said to be an \emph{implementation} of the contract $\cont$, meaning that $M$ provides the specified guarantees when it operates in an environment that satisfies the contract's assumptions.
There exists a preorder of contracts: we say $\cont_1$ is a refinement of $\cont_2$, denoted $\cont_1 \leq \cont_2$, if $(a_2 \leq a_1) \text{ and } (a_1 \to g_1 \leq a_2 \to g_2)$. 
Contracts $\cont_1$ and $\cont_2$ are said to be equivalent if $\cont_1 \le \cont_2$ and $\cont_2\le \cont_1$.
\end{defi}
As shown in~\cite{incer2024adjunction}, the contract algebra is a Stone algebra, but not a Boolean algebra.
A contract has three possible evaluations---not two---as requirements normally do (i.e., a requirement is either SAT or UNSAT).
A contract can evaluate to either \texttt{FAIL}, \texttt{ACTIVE}, or \texttt{IDLE}.
It evaluates to \texttt{FAIL} when the assumptions are satisfied but the guarantees are not. It evaluates to \texttt{ACTIVE} when both the assumptions and guarantees are satisfied. A contract evaluates to \texttt{IDLE} when the assumptions are not satisfied.
Given a system-level failure,
the diagnostics process corresponds to identifying the component contract that evaluates to \texttt{FAIL}.
It is key to our diagnostics process to compute the composition operation of contracts. A recent breakthrough in contract-based design showed how to compute this operation efficiently---see~\cite{incer2025pacti}.

The results of this paper are implemented in Pacti, an open-source Python package for compositional system analysis and design.
Components are defined using assume-guarantee contracts, and contract operations can be performed, such as composition, merging, and quotient. Contracts in Pacti are defined over a term algebra $T$ with Boolean semantics.
Pacti's basic data structure is the IO contract.
\begin{defi}[IO Contract~\cite{incer2025pacti}]
\label{def:iocontract}
Let $V$ be a set of variables. An \emph{IO contract} is the tuple $(I, O, \mathfrak{a}, \mathfrak{g})$, where $I,O \in V$ are disjoint sets of input and output variables respectively, $\mathfrak{a} \in T$ a set of assumptions, and $\mathfrak{g} \in T$ a set of guarantees. The terms in the assumptions only refer to input variables, and the terms in the guarantees only refer to input and output variables.
\end{defi}

The key contract operation for system-level design is composition, which yields the contract of a system obtained by interconnecting components represented using contracts.
The composition of contract $\mc{C}= (a,g)$ and contract $\mc{C'}= (a',g')$ can be directly computed as
\begin{equation}
    \label{kqjhfngbqjhkd}
    \mc{C}_c = \mc{C} \parallel \mc{C}' = ((a \land a') \lor (a \land \neg g) \lor (a' \land \neg g'), (g \lor \neg a) \land (g' \lor \neg a')).
\end{equation}
This yields the most refined contract that a system comprising two components, $M$ and $M'$, will satisfy, provided that $M$ and $M'$ were implemented such that they satisfy their corresponding contracts $\mc{C}$, and $\mc{C}'$, respectively. 
Whereas this operation satisfies optimality criteria,
the authors of~\cite{incer2025pacti} observe that the output of composition should be a contract expressed using only the variables that lie at the interface of the resulting system, which equation~\eqref{kqjhfngbqjhkd} does not provide.


To express contract composition using only interface variables,
equation~\eqref{kqjhfngbqjhkd} can be relaxed by either refining the assumptions, relaxing the guarantees, or both. 
For two assumptions $a$ and $a'$, we say that $a'$ refines $a$, denoted $a \geq a'$, if the denotation set corresponding to $a'$ is a subset of the set corresponding to $a$. 
Similarly, for guarantees $g$ and $g'$, we say that $g'$ is a relaxation of $g$, $g \leq g'$, if the denotation set corresponding to $g$ is a subset of the set corresponding to $g'$.
When thinking about assumptions and guarantees each as conjunctions of terms (or constraints), refining a contract informally corresponds to either assuming \emph{less}, guaranteeing \emph{more}, or both.
Pacti makes use of this contract relaxation to eliminate internal variables so that the returned composition is expressed only using the interface variables of the system.
Any relaxed contract that is computed in this way will be satisfied by a correct implementation of the system. Nevertheless, for a contract to be useful, we need to compute the relaxation systematically, as in the extreme case a contract that guarantees every possible behavior is a valid, yet pointless refinement in the context of capturing the system's behavior.

When computing a composition of contracts $\mc{C}$ and $\mc{C'}$, the assumptions of the composed contract are given as follows:
\begin{equation}
a_c = \overbrace{(a \land a')}^{\mathclap{\text{stem}}} \lor \overbrace{(a \land \neg g') \lor (a' \land \neg g)}^{\mathclap{\text{failure terms}}}.
\end{equation}

We refer to the first term as the \emph{stem}, as this is where the composed system should operate---where the assumptions of both components are satisfied. The second and third terms are referred to as \emph{failure terms}, where each term refers to one of the components having its assumptions satisfied but not delivering their guarantees.
As we want the composition to operate in the stem,
Pacti uses the failure terms to eliminate from the stem the variables that are not part of the interface of the resulting system.
These failure terms serve as the context for the elimination of variables in the stem---the details about this are contained in \cite{incer2025pacti}.
Once the failure terms are no longer needed to eliminate variables in the stem they are discarded, and we define the assumption of the relaxed contract as the transformed stem by refinement. The transformation of the guarantees follows a similar argument, but ensures that the variables are eliminated by relaxing the guarantees. The guarantees of a composition are given as follows:
\begin{align*}
g_c &= (g \lor \neg a) \land (g' \lor \neg a')
 = \underbrace{(g \land g')}_{\text{stem}} \lor (\neg a \land g') \lor (\neg a' \land g) \lor (\neg a \land \neg a'),
\end{align*}
where the \emph{stem} again refers to the desired area of operation, when both components satisfy their guarantees. The stem may contain variables that should be eliminated. We can use the remaining terms as a \emph{context} to transform the stem to eliminate these variables.
The transformation of either assumptions or guarantees is carried out in Pacti by functions that take as arguments the term to be transformed, the variables that need to be eliminated, and the context that can be used to carry out such elimination.

\section{Tracing System Guarantees}
This section 
presents a methodological approach to modeling the system and its constituent components to support an effective fault diagnosis mechanism. Our approach is based on the computation of contract operations in Pacti. We extended this tool to support diagnostics.

To begin, we assume that
each component in the system is modeled as an IO contract. 
For example, for a component $M$, we can define the corresponding IO contract $\mc{C} = (I, O, \mathfrak{a}, \mathfrak{g})$, where the set $\mathfrak{a}$ contains the term $i \le 2$ and the guarantee set $\mathfrak{g}$ contains the term $o \le 2i +1$, where $I = \{i\}$, and $O=\{o\}$ are the singleton sets of the input and output variables.

\begin{defi}[Faulty component]
Given a component $M$ and the corresponding contract $\mc{C}=(I,O,\mathfrak{a}, \mathfrak{g})$, $M$ is \emph{faulty} if it contains a behavior that does not satisfy the guarantees $\mathfrak{g}$, but the assumptions $\mathfrak{a}$ are satisfied. 
\end{defi}

As discussed in the background section, Pacti uses a filtering procedure to determine the relevant context terms when computing the assumptions and guarantees of the composed contract. These context terms come from the assumptions and guarantees of the contracts being composed.
We extended Pacti with an ID system to keep track of which context terms were used to generate a resulting term for the assumptions and guarantees of the composed contract. For each composition operation, we can thus define a composition graph that allows us to map the composed assumption and guarantee terms to the terms that were used in their transformation.
A composition graph consists of a set of vertices, where each vertex corresponds to a term in the assumptions or guarantees of the individual contracts, and the composed contract.
The edges in the composition graph connect vertices if the corresponding individual contract term was used to generate the composed contract term, shown as the edges from left to right in Figure~\ref{fig:graph_single_composition}.

\begin{defi}[Composition Graph]
Let components \( M_1 \) and \( M_2 \) have corresponding IO contracts 
\( \cont_1 = (I_1, O_1, \mathfrak{a}_1, \mathfrak{g}_1) \) and 
\( \cont_2 = (I_2, O_2, \mathfrak{a}_2, \mathfrak{g}_2) \), 
with their composition given by 
\( \cont= (I, O, \mathfrak{a}, \mathfrak{g}) \). 
A \emph{composition graph} is a directed graph 
\( G = (V, E) \), where each vertex in \( V \) corresponds to a term 
from the assumptions or guarantees of 
\( \cont_1 \), \( \cont_2 \), or \( \cont \). 
Specifically, \( V_{i,a} \subseteq V \) and \( V_{i,g} \subseteq V \) 
represent the assumptions \( \mathfrak{a}_i \) and guarantees 
\( \mathfrak{g}_i \) of component \( i \), while 
\( V_a \subseteq V \) and \( V_g \subseteq V \) represent the composed 
assumptions and guarantees. 
For simplicity, we use the same symbol \( s \) to refer to both a term 
and its corresponding vertex \( s \in V \). 
An edge \( (u, v) \in E \) exists if the term \( u \) was used to generate 
term \( v \). 
A path from vertex \( s \) to \( t \) in \( G \), denoted 
\( \texttt{path}(G, s, t) \), exists if there is a sequence of edges 
connecting the vertices.
\end{defi}

\begin{example}
\label{ex:single_composition}
Given two components $M_1$ and $M_2$ and their inputs and outputs as illustrated in Figure~\ref{fig:ex_single_composition} and their IO contracts as $\mc{C}_1 = (\{i\}, \{o\}, \mathfrak{a}_1, \mathfrak{g}_1)$, where
 $\mathfrak{a}_1 = \{ i \geq 0, i \leq 2 \} \text{, and } \mathfrak{g}_1 =  \{ o + i \leq 3\}$ and $\mc{C}_2 = (\{o\}, \{o'\}, \mathfrak{a}_2, \mathfrak{g}_2)$, where 
 $\mathfrak{a}_1 = \{ o \leq 5 \} \text{, and } \mathfrak{g}_1 =  \{ o + 2o' \geq 6\}$. The composition results in the contract $\mc{C} = (\{i\}, \{o'\}, \mathfrak{a}, \mathfrak{g})$, where $\mathfrak{a} = \{ i \geq 0, i \leq 2 \}$, and $\mathfrak{g} =  \{ i - 2o' \leq -3\}$.
The composition graph corresponding to the composition $\mc{C}_1 \parallel \mc{C}_2$ is shown in Figure~\ref{fig:graph_single_composition}.
\end{example}

From our experience,
the transformation of an assumption or guarantee usually involves few context terms. This means that resulting composition graph is very sparse.
This allows us to trace a composition-level guarantee back to a small set of terms on the component level.

All operations in Pacti are computed under the assumption that the components operate correctly as specified by their contracts. The entire premise of testing lies in the difference between a `perfectly' specified system and its real-world implementation.
In a real-world system, specifications might be incomplete, or implementations might be faulty. A single component failure might present itself in multiple ways; it might result in a system-level guarantee violation, or it might not. If a component failure is latent, meaning it does not show itself in the system-level guarantee violation, our proposed framework cannot detect it. If the fault does show itself in a system-level guarantee violation, we can trace it to the responsible component(s) by tracking which relevant component's guarantees were violated under satisfied assumptions.
\vspace{-5mm}

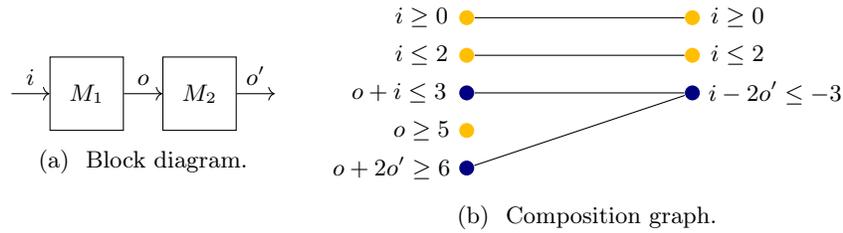
\begin{figure*}
\centering
\begin{minipage}{0.3\textwidth}
\centering
\tikzstyle{block} = [draw, fill=white, rectangle, 
    minimum height=3em, minimum width=3em]
\tikzstyle{init} = [pin edge={to-,thin,black}]
\begin{tikzpicture}[node distance=2.5cm,auto]
    \node [block] (a) {$M_1$};
    \node (init) [left of=a,node distance=1cm, coordinate] {};
    \node [block] (c) [right of=a, node distance=1.5cm]{$M_2$};
    \node (topend) [right of=c,node distance=1cm, coordinate] {};
    \path[->] (init) edge node {$i$} (a);
    \path[->] (a) edge node {$o$} (c);
    \path[->] (c) edge node {$o'$} (topend);
\end{tikzpicture}
\subcaption{\label{fig:ex_single_composition} Block diagram.}
\end{minipage}
\begin{minipage}{0.65\textwidth}
\centering
\begin{tikzpicture}
\definecolor{dandelion}{rgb}{1.0, 0.75, 0.0}
\tikzstyle{orange}=[circle, fill=dandelion, inner sep=2pt]
\tikzstyle{blue}=[circle, fill=blue!50!black, inner sep=2pt]

\node[orange] (o1) at (0,1.5) {}; \node[left of=o1, node distance=6mm]{\small $i \geq 0$};
\node[orange] (o2) at (0,1) {}; \node[left of=o2, node distance=6mm]{\small $i \leq 2$};
\node[blue]   (b1) at (0,0.5) {}; \node[left of=b1, node distance=9mm]{\small $o + i \leq 3$};
\node[orange] (o3) at (0,0) {}; \node[left of=o3, node distance=6mm]{\small $o \geq 5$};
\node[blue]   (b2) at (0,-0.5) {}; \node[left of=b2, node distance=10mm]{\small $o + 2o' \geq 6$};

\node[orange] (o1r) at (3,1.5) {}; \node[right of=o1r, node distance=6mm]{\small $i \geq 0$}; 
\node[orange] (o2r) at (3,1) {}; \node[right of=o2r, node distance=6mm]{\small $i \leq 2$};
\node[blue]   (b1r) at (3,0.5) {}; \node[right of=b1r, node distance=11mm]{\small $i - 2o' \leq -3$};

\draw (o1) -- (o1r);
\draw (o2) -- (o2r);
\draw (b1) -- (b1r);
\draw (b2) -- (b1r);

\end{tikzpicture}
\subcaption{\label{fig:graph_single_composition} Composition graph.}
\end{minipage}

\caption{Block diagram and composition graph for Example~\ref{ex:single_composition}.}
\label{fig:single_composition}
\end{figure*}
\vspace{-5mm}

At this point, we can create a composition graph for the composition of two components and their corresponding IO contracts. To build the overall system, we need to compose multiple components. Contract composition is a binary operation. Therefore, to compose the entire system, we need to compose two components at a time, and then compose their composition with the next component.
There are many ways of composing the same system, and the resulting contract for the composition is dependent on the order of composition. As Pacti hides internal variables, the composition has to be chosen carefully such that the variables necessary for future compositions are kept. 
Another important aspect that can guide the composition order is the availability of component data. When there is a lack of available information from inside a block of components it might be beneficial to compose these components first and treat them as a meta-component---a grouping of multiple components. If the analysis ends up pointing to this meta-component as the possible cause, a more detailed analysis can still be set up focusing on these components.
For this framework, we assume that a composition order has been chosen. This order will be maintained for the remainder of the diagnostics process.

\begin{defi}[Composition Order]
\label{def:composition_order}
The component contracts are given as a \emph{composition order} 
\( \mathtt{CompOrd} = [\cont_1, \ldots, \cont_N] \), 
where \( N \) is the number of components. 
\( \mathtt{CompOrd} \) specifies the order in which contracts are composed 
to compute the system specification. 
The composition up to the \( k^{th} \) contract is denoted 
\( \cont_{\text{comp},k} \coloneqq \cont_1 \parallel \ldots \parallel \cont_k \).
\end{defi}
This composition order requires composing the system starting from a single component and building the system up from there. If it is desired to start by composing certain regions of the system first, this framework can easily be extended to include this approach. Otherwise, the component contracts in the composition order need to be provided at the level of granularity such that they can be composed according to a composition order defined in Definition~\ref{def:composition_order}.
We will now define the diagnostics graph that corresponds to the composition of multiple components and outline how it is constructed. The union of two graphs $G_1$ and $G_2$ is defined as $G = (V_1 \cup V_2, E_1 \cup E_2)$, and we will denote it by $G = G_1 \cup G_2$. Simply stated, the diagnostics graph consists of multiple composition graphs. The system is composed step-by-step according to the composition order, and for every composition, the diagnostics graph is extended by the composition graph for this composition. An example is shown in Figure~\ref{fig:three_components_ex_graph}, where we can see the union of two composition graphs. Each `column' of vertices corresponds to the individual contract terms in a composition, where the top nodes correspond to the already composed system (or the first component for the first composition), and the bottom vertices correspond to the next contract in the composition order.

\begin{defi}[Diagnostics Graph]
\label{def:diagnostics_graph}
Given component contracts in a composition order $\mathtt{CompOrd} = [\mc{C}_1, \ldots, \mc{C}_N]$, the  \emph{diagnostics graph} $G=(V,E)$ is constructed as follows.
For each $i$, $2 \leq i \leq N$, we compute the composition $\mc{C}_{\text{comp,i-1}} \parallel \mc{C}_i$ and the corresponding graph $G_i$. Then the diagnostics graph $G$ is defined as $G=\bigcup_{i=2}^N G_i$.
\end{defi}

\begin{defi}[Diagnostics Map]
\label{def:causality_map}
Let $\mathtt{CompOrd}$ be the composition order for $N$ components and their contracts. Let $\mc{C} = \parallel_{i=1}^N\mc{C}_i = (\mathfrak{a}, \mathfrak{g})$ be the system-level composition according to the composition order,
and let the corresponding diagnostics graph be $G$. Then, we can define the \emph{diagnostics map} $\mathtt{CM}: \mathfrak{g} \rightarrow 2^{(\bigcup_{i=1}^N(\mathfrak{a}_i \cup \mathfrak{g}_i)) \times \{\mc{C}_i \}_{i=1}^N}$, that maps each composed assumption and guarantee term to a set of component level assumption or guarantee terms through the diagnostics graph. That is, for system-level term $s$ and the corresponding vertex $s \in V$, we have 
\begin{equation}
\label{eq:causality_map}
\begin{split}
\mathtt{CM}(s) =& \{ (t, \mc{C}_i) \:\vert \forall \mc{C}_i \in \mathtt{CompOrd},\: t\: \in  \mathfrak{g}_i \cup \mathfrak{a}_i, \text{ if } \\
&\exists \mathtt{path}(G,t,s) \text{ and } \forall u\in V, u \neq t \implies \:\nexists (u,t) \in E\},
\end{split}
\end{equation}
where $t \in V$ corresponds a component-level term used to generate $s$, and $i$ is the index of the component the term belongs to.
\end{defi}
The diagnostics map finds the leaf nodes in the diagnostics graph that have a path to the vertex corresponding to the violated guarantee. For each leaf node, it returns the term corresponding to the leaf node and the contract that this term belongs to in the form of a tuple. With this information, we can now focus our attention on these terms first and start the diagnostics process by evaluating these terms using the available test data.

\begin{figure}[t]
    \centering
    \begin{minipage}[b]{0.4\linewidth}
        \centering
        \tikzstyle{block} = [draw, fill=white, rectangle, 
            minimum height=3em, minimum width=3em]
        \tikzstyle{init} = [pin edge={to-,thin,black}]
        \begin{tikzpicture}[node distance=2cm,auto]
            \node [block] (c1) {$M_1$};
            \node (init) [left of=c1,node distance=1.2cm, coordinate] {};
            \node (center) [below of=c1,node distance=1cm, coordinate] {};
            \node [block] (c2) [below of=c1, node distance=2cm]{$M_2$};
            \node [block] (c3) [right of=center, node distance=2cm]{$M_3$};
            \node (topend) [right of=c3,node distance=1.2cm, coordinate] {};
            \node (bottominit) [left of=c2,node distance=1.2cm, coordinate] {};
            \path[->] (init) edge node {$i$} (c1);
            \path[->] (bottominit) edge node {$j$} (c2);
            \path[->] (c1) edge node {$a$} (c3);
            \path[->] (c2) edge node {$b$} (c3);
            \path[->] (c3) edge node {$o$} (topend);
        \end{tikzpicture}
        \subcaption{Block diagram.}
        \label{fig:three_components_ex}
    \end{minipage}
    \begin{minipage}[b]{0.5\linewidth}
        \centering
        \includegraphics[width=\linewidth]{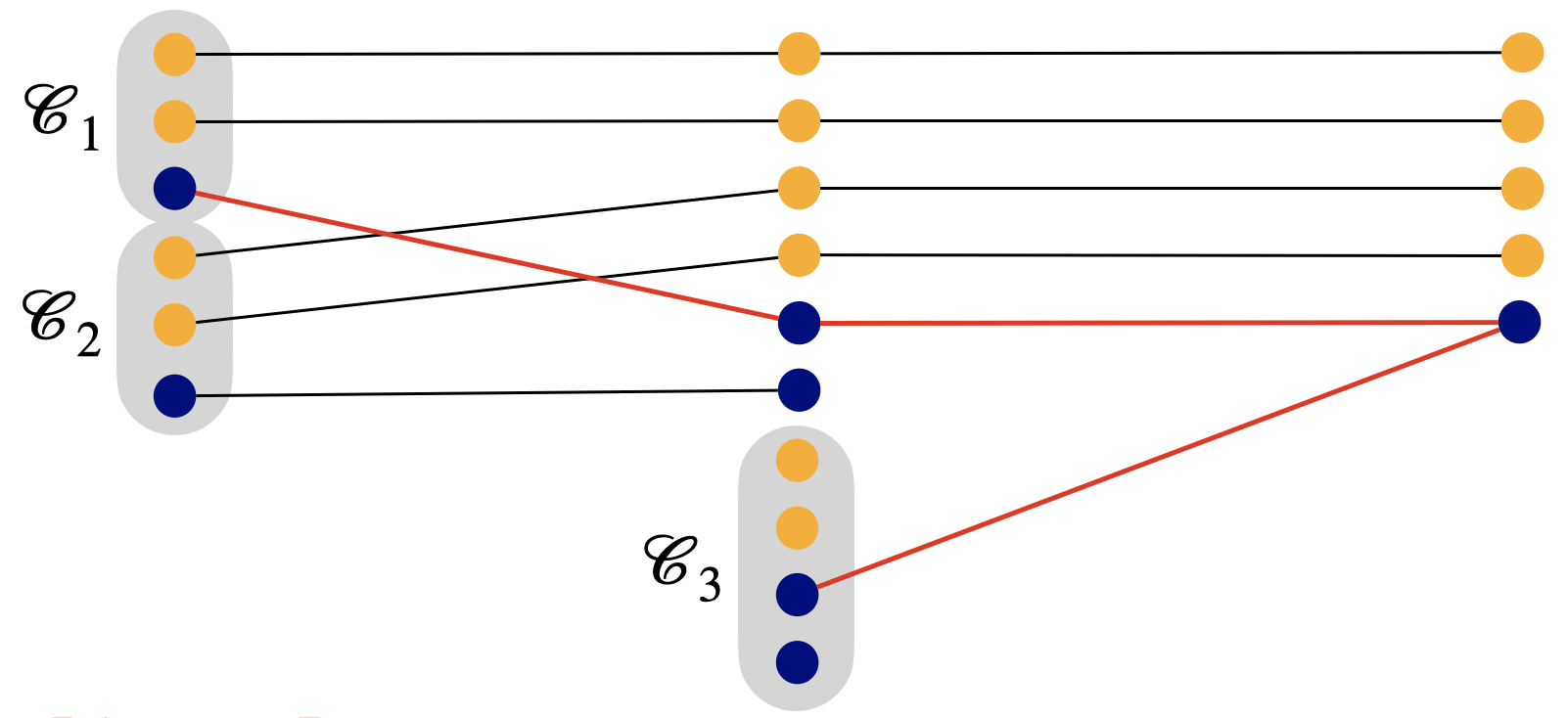}
        \subcaption{Diagnostics graph.}
        \label{fig:three_components_ex_graph}
    \end{minipage}
    \label{fig:assumption_trace_ex}
    \caption{Block diagram and diagnostics graph for composition in Example~\ref{ex:three_components_ex}.}
    \vspace{-5mm}
\end{figure}


Using the diagnostics graph and the system block diagram, we can identify the component assumptions and guarantees that were relevant to the generation of the top-level system guarantee that was violated. We can then focus on the relevant components of the system block diagram and analyze whether the guarantee was satisfied or violated. Once we find a component where the guarantee was violated we can shift our focus to the assumptions of this term. From then on two different scenarios can occur: i) the assumptions are satisfied, or ii) the assumptions are violated. \\
\textbf{Case i)} When a contract's assumptions are satisfied, but the component does not deliver the contract's guarantees, it either means that the component failed or
that the contract did not adequately characterize the context of operation of the component. 
The analysis at this level terminates, and the component designer needs to analyze the behavior of this component.
\\
\textbf{Case ii)} In the case of violated assumptions, the component is likely not the cause of the failure, as another component's behavior resulted in the violation of the assumptions. To diagnose this fault, we need to trace the violated assumption back to the guarantees of one or more components.
We do this by searching over
the composition order for
the instance of a composition of the
component whose assumptions were violated.
As this component's contract was composed with another contract,
we identify the guarantees of this second contract that are relevant for the satisfaction of the violated assumption.
The function \textproc{ElimVarsbyRefinement} in Pacti
identifies for us the a subset of terms that participate in the satisfaction of formulas---see \cite{incer2025pacti} for details about this function.
After we identify the assumptions and guarantees of the second contract,
we refer to the diagnostics graph again to trace these newly identified guarantees back.

It is important to note that this process of tracking assumptions only applies to components whose assumptions are not solely dependent on the overall system input variables. 
A system-level failure is defined as having satisfied the system-level assumptions, but failing to satisfy the guarantees of the system. Thus, the system-level assumptions are satisfied---this will ensure that component-level assumptions that are only dependent on the system-level input variables are also satisfied.

\paragraph{Identifying Causes for Violated Assumptions.}
Assume we are given the component $M$, its corresponding contract $\mc{C} = (I,O,\mathfrak{a}, \mathfrak{g})$, the component $M_{\text{other}}$ in the composition order that is composed with component $M$, and the contract $\mc{C}_{\text{other}}= (I_{\text{other}},O_{\text{other}},\mathfrak{a}_{\text{other}}, \mathfrak{g}_{\text{other}})$
corresponding to $M_{\text{other}}$.
The component assumption that was violated is denoted $a_v \in \mathfrak{a}$. We will use Pacti to find the relevant context used to refine this assumption, referred to as the function $\textproc{FindCauseForAssumption}(a_v, \mathfrak{a}_{\text{other}} \cup \mathfrak{g}_{\text{other}})$.
This function exploits the function call \textproc{ElimVarsbyRefinement} in Pacti, which
transforms the assumption $a_v$ with the use of $\mathfrak{a}_{\text{other}} \cup \mathfrak{g}_{\text{other}}$ as the context to eliminate any unwanted variables.
We can make use of the same function augmentation that we created to compute the diagnostics graph to analyze the transformation at this level.
The instrumentation of the filtering step will return the relevant context terms $\mathfrak{c}_r \subseteq \mathfrak{a}_{\text{other}} \cup \mathfrak{g}_{\text{other}}$ in the assumptions and guarantees of $\mc{C}_{\text{other}}$. Once we have determined $\mathfrak{c}_r$, the diagnostics map allows us to trace back the terms in $\mathtt{CM}(c)$ for each $c \in \mathfrak{c}_r$ to the responsible component level terms.
The entire diagnostics procedure is outlined in Algorithm~\ref{alg:diagnostics}.

\floatname{algorithm}{Algorithm}
\renewcommand{\algorithmicrequire}{\hspace*{\algorithmicindent}\quad\textbf{Input:}}
\renewcommand{\algorithmicensure}{\hspace*{\algorithmicindent}\quad\textbf{Output:}}

\begin{algorithm}
\caption{Diagnosing Violated Guarantee $g_v$}\label{alg:diagnostics} 
\begin{algorithmic}[1] 

\Procedure{Diagnose}{$g_v$, $\mathtt{CompOrd}$, $\mathtt{Log}$}
\Require failed guarantee $g_v$, composition order $\mathtt{CompOrd} = [\mc{C}_1, \ldots, \mc{C}_N]$, log data $\mathtt{Log}$
\Ensure set of failed components $C_f$
\State G $\leftarrow (\emptyset,\emptyset)$ \Comment{Initialize empty diagnostics graph}
\For{$\mc{C}_i \in \mathtt{CompOrd}$}
\State $\mc{C}_{\text{comp},i-1} \leftarrow \mc{C}_1\parallel \ldots \parallel\mc{C}_{i-1}$
\State $G_i \leftarrow$ \textproc{CompositionGraph}($\mc{C}_{\text{comp},i-1} ,\mc{C}_i)$
\State $G \leftarrow G \cup G_i$ \Comment{Add composition graph to diagnostics graph}
\EndFor
\State  $\mathtt{CM} \leftarrow $ define diagnostics map according to equation~\eqref{eq:causality_map}
\State $C_f \leftarrow \textproc{Trace}(g_v, \mathtt{CompOrd}, \mathtt{CM})$ \Comment{Find set of failed components}
\State \textbf{return} $C_f$
\EndProcedure
\State

\Procedure{Trace}{$g_v$, $\mathtt{CompOrd}$, $\mathtt{CM}$, $\mathtt{Log}$}
\Require guarantee to trace $g_v$, composition order $\mathtt{CompOrd} = [\mc{C}_1, \ldots, \mc{C}_N]$, diagnostics map $\mathtt{CM}$, log data $\mathtt{Log}$, components $M_1,\ldots, M_N$
\Ensure set of failed components $C_f$
\State $C_f \leftarrow \emptyset$ \Comment{Initialize empty set of failed components}
\For{$(t,\mc{C}_i) \in \mathtt{CM}(g_v)$} \Comment{Component-level term $t$, component index $i$}
\If{$t \in \mathfrak{g}_i$} \Comment{$\mathfrak{g}_i$ are the guarantees of $\mc{C}_i$}
\If{$\textproc{NotSatisfied}(t, \mathtt{Log})$} \Comment{Check if $t$ is satisfied in log data}
\State $\mathtt{AssumptionsSatisfied} \leftarrow \text{True}$ \Comment{Initialize flag as True}
\For{$a_i \in \mathfrak{a}_i$}
\If{$\textproc{NotSatisfied}(a_i, \mathtt{Log})$}\Comment{Check $a_i$ in log data}
\State $\mathtt{AssumptionsSatisfied} \leftarrow \text{False}$
\State $\mc{C}_{\text{other}} \leftarrow \mc{C}_1\parallel \ldots \parallel\mc{C}_{i-1}$
\State $\mathfrak{c} \leftarrow $\textproc{FindCauseForAssumption}($a_i$, $\mathfrak{a}_\text{other} \cup \mathfrak{g}_\text{other}$)
\For{$c_k \in \mathfrak{c}$}
\State $C_f \leftarrow C_f \cup \textproc{Trace}(c_k, \mathtt{CompOrd}, \mathtt{CM})$
\EndFor
\EndIf
\EndFor
\If{$\mathtt{AssumptionsSatisfied}$} 
\State $C_f \leftarrow C_f \cup M_i$ \Comment{Add component $i$ to the list}
\EndIf
\EndIf
\EndIf
\EndFor
\State \textbf{return} $C_f$
\EndProcedure
\State

\Procedure{FindCauseForAssumption}{$a_i$, $\mathfrak{a}_\text{other} \cup \mathfrak{g}_\text{other}$}
\Require assumption to trace $a_i$, context terms $\mathfrak{a}_\text{other} \cup \mathfrak{g}_\text{other}$
\Ensure list of relevant context terms $\mathfrak{c}$
\State $\_,\mathfrak{c} \leftarrow \textproc{ElimVarsbyRefinement}(a_i, \mathfrak{a}_\text{other} \cup \mathfrak{g}_\text{other})$ \Comment{Refine the assumption $a_i$ using the context terms $\mathfrak{a}_\text{other} \cup \mathfrak{g}_\text{other}$ and return the relevant context terms}
\State \textbf{return} $\mathfrak{c}$
\State 

\EndProcedure
\end{algorithmic}
\end{algorithm}

This approach can identify multiple component faults under certain conditions. As discussed above, we can only find the faulty component if a system-level guarantee is violated. If two component faults end up cancelling each other out (i.e., are not observable at the system level), then this approach cannot identify them as no system-level guarantee was violated. If a faulty component results in violated assumptions for another component, we cannot determine whether the component with the violated assumption also failed. This is due to the fact that for a contract with violated assumptions, any behavior is allowed. Under the condition that all faulty components are independent (i.e., a faulty component does not lead to violated assumptions of another faulty component), this procedure is able to identify all faulty components.



\begin{example} 
\label{ex:three_components_ex}
Let there be a system consisting of three component contracts in the composition order  $\mathtt{CompOrd} = [\mc{C}_1,\mc{C}_2, \mc{C}_3]$ with their inputs and outputs as illustrated in Figure~\ref{fig:three_components_ex}. The IO contracts are given as $\mc{C}_1 = (\{i\}, \{a\}, \mathfrak{a}_1, \mathfrak{g}_1)$, where
 $\mathfrak{a}_1 = \{i \leq 2, i \geq 0\} \text{, and } \mathfrak{g}_1 =  \{a \leq 2\}$ and $\mc{C}_2 = (\{j\}, \{b\}, \mathfrak{a}_2, \mathfrak{g}_2)$, where $\mathfrak{a}_2 = \{j \leq 2, j \geq 0\} \text{, and } \mathfrak{g}_2 =  \{b \leq 3\}$ and $\mc{C}_3 = (\{a,b\}, \{o\}, \mathfrak{a}_3, \mathfrak{g}_3)$, where $\mathfrak{a}_3 = \{a \leq 5, b \leq 5\} \text{, and } \mathfrak{g}_3 =  \{o \leq a, o \leq b\}$.
 The system-level contract is computed as $\mc{C} = (\{i,j \}, \{ o\}, \mathfrak{a}, \mathfrak{g})$, where $\mathfrak{a} = \{i \leq 2, i \geq 0, j \leq 2, j \geq 0 \}$, and $\mathfrak{g} = \{ o \leq 2\}$.
 Suppose the following trace is observed during execution: $i = 1$, $j = 1$, $a = 2$, $b= 7$, $o=3$.
 The system-level guarantee $g_v \coloneqq o \leq 2$ is violated while the assumptions are satisfied.
The diagnostics map is given by $\mathtt{CM}(g_v) = \{ (a \leq 2, \mc{C}_1), (o \leq a, \mc{C}_3) \}$.
Applying Algorithm~\ref{alg:diagnostics},
we check the guarantee of component $M_1$ first and see that $a \leq 2$ is satisfied. Next, we check $o \leq a$, which is not satisfied, as $3 \not\leq 2$. This narrows our analysis on component $M_3$.
Evaluating the assumptions of contract $\mc{C}_3$, we find that $a \leq 5$ is satisfied, but $b \leq 5$ is not. Therefore, $M_3$ is not responsible for the violation. We can now trace which terms were used to transform $b \leq 5$ to find which terms to evaluate next in our search for the failed component. For this, we compute $\mc{C}_{\text{other}} = \mc{C}_1 \parallel \mc{C}_2$ from the composition order and evaluate 
 $\textproc{FindCauseForAssumption}(b \leq 5, \mathfrak{a}_{\text{other}} \cup \mathfrak{g}_{\text{other}})$, which returns the relevant context term as the following guarantee from component contract $\mc{C}_2$, $b \leq 3$. This guarantee is not satisfied, as $7 \not\leq 3$.
 Subsequently, we check the assumptions for $\mc{C}_2$, $0 \leq j \leq 2$, which are satisfied, leading to the identification of $M_2$ as the component responsible for the violation.
In this example, only 6 terms were checked instead of all 10, showing that tracing the cause of a violated assumption can require fewer checks than evaluating all component-level terms from the log.
\end{example}

\begin{theorem}
Suppose we have a list of components $M_1,\ldots,M_N$, their contracts in a composition order $\mathtt{CompOrd} = [\mc{C}_1,\ldots \mc{C}_N]$, a violated system-level guarantee $g_v$, and the complete log data of a failing trace $\mathtt{Log}$. If $g_v$ is a guarantee of the composed system, we can identify the faulty component(s) using Algorithm~\ref{alg:diagnostics}.
\end{theorem}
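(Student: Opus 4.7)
The plan is to proceed in three stages: (i) establish that the diagnostics map $\mathtt{CM}$ soundly captures every component-level term that could have contributed to the violated system-level guarantee $g_v$; (ii) perform a case analysis on each term returned by $\mathtt{CM}(g_v)$ in line with Algorithm~\ref{alg:diagnostics}; (iii) argue that the recursion through \textproc{Trace} and \textproc{FindCauseForAssumption} terminates and covers all faulty components in the sense of the faulty-component definition.

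First I would show the soundness of the diagnostics map. By Definition~\ref{def:diagnostics_graph}, the graph $G$ is the union of composition graphs obtained step-by-step along $\mathtt{CompOrd}$, and each composition graph records exactly the context terms used by Pacti's variable-elimination routines to produce composed assumptions and guarantees from component ones. Hence if a component-level term $t$ is not reachable to $g_v$ in $G$, then $t$ was never used in generating $g_v$, so $t$ cannot be responsible for its violation. Conversely, $\mathtt{CM}(g_v)$ collects the leaves reachable to $g_v$, which are precisely the component-level predicates whose truth value could affect $g_v$. This is the key structural lemma and follows directly from the definitions of composition graph, diagnostics graph, and Equation~\eqref{eq:causality_map}.

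Next I would argue correctness of the inner loop of \textproc{Trace}. For each $(t,\mc{C}_i)\in \mathtt{CM}(g_v)$ with $t$ a guarantee of $\mc{C}_i$, the algorithm checks $t$ against $\mathtt{Log}$. If $t$ holds, $\mc{C}_i$ cannot be blamed through this term. If $t$ is violated, two cases arise exactly as discussed in the text above the algorithm. In case (i), all assumptions of $\mc{C}_i$ are satisfied by $\mathtt{Log}$; then by the faulty-component definition $M_i$ is faulty, and it is added to $C_f$. In case (ii), some assumption $a_i$ is violated; then $M_i$ cannot yet be blamed and the algorithm calls \textproc{FindCauseForAssumption}, which by construction returns the subset $\mathfrak{c}\subseteq \mathfrak{a}_\text{other}\cup\mathfrak{g}_\text{other}$ of context terms used by \textproc{ElimVarsbyRefinement} to justify $a_i$ during composition. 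Any term outside $\mathfrak{c}$ is irrelevant to the satisfaction of $a_i$, so restricting attention to $\mathfrak{c}$ is lossless. The algorithm then recurses on each $c_k\in\mathfrak{c}$, treating it as a new target to trace.

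The main obstacle is the termination and completeness of this recursion. Termination follows because each recursive call to \textproc{Trace} is invoked with $\mc{C}_{\text{other}} = \mc{C}_1\parallel\cdots\parallel\mc{C}_{i-1}$, strictly earlier in the composition order, so the index $i$ serves as a well-founded measure; the diagnostics graph is finite, so only finitely many paths ever need to be traversed. For completeness, I would invoke the independence assumption stated at the end of the algorithm section: if no faulty component induces the assumption violation of another faulty component, then every faulty component reachable from $g_v$ in $G$ appears in some branch of the recursion as a case-(i) leaf and is added to $C_f$. Combining the soundness of $\mathtt{CM}$, the case analysis, and this termination/completeness argument yields the theorem: under the stated hypotheses Algorithm~\ref{alg:diagnostics} returns a set $C_f$ that contains the faulty component(s) responsible for the violation of $g_v$.
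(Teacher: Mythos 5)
Your proposal follows essentially the same route as the paper's proof: trace the violated guarantee through the diagnostics map $\mathtt{CM}(g_v)$, check each candidate guarantee against $\mathtt{Log}$, case-split on whether the blamed component's assumptions hold (satisfied assumptions plus violated guarantee means faulty by definition; violated assumptions trigger \textsc{FindCauseForAssumption} and recursion on the returned context terms), and conclude by the definition of a faulty component. The paper's own argument is terser---it does not spell out the soundness of $\mathtt{CM}$, the well-founded termination measure on the composition order, or the independence-based completeness step you add---but the underlying structure and key ideas are the same.
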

\begin{proof}
Let us denote the composed contract as $\mc{C} = (I,O,\mathfrak{a}, \mathfrak{g})$. For a given composition and the corresponding contract $\mc{C}$, under satisfied system-level assumptions $\mathfrak{a}$ and a violated system-level guarantee $g_v \in \mathfrak{g}$, by construction of the composition, there exists at least one faulty component.
From $\mathtt{CompOrd}$, we can construct a diagnostics map $\mathtt{CM}$. If $g_v \in \mathfrak{g}$, $\mathtt{CM}(g_v)$ is guaranteed to contain at least one component-level guarantee $g_k \in \mathfrak{g}_k$, a guarantee of contract $\mc{C}_k$, where $1 \leq k \leq  N$. For each $g \in \mathtt{CM}(g_v)$, we evaluate from the trace $\mathtt{Log}$ whether it is satisfied or violated. If $g_k$ is violated, we have two different cases: i) if the assumptions $\mathfrak{a}_k$ of contract $\mc{C}_k$, are satisfied, then $M_k$ is added to the list of responsible components; in case ii), if the assumptions of $\mc{C}_k$ are violated, from the composition operation in Pacti, we can identify which component-level terms were used to refine this assumption and identify which terms to evaluate next. 
By definition of assume-guarantee contracts, an implementation of a contract where the assumptions are satisfied and whose guarantees are violated is faulty. Any component in the analysis that violated its guarantees under satisfied assumptions is faulty.
\end{proof}

\section{Examples}
The following example was inspired by Alice, Caltech's entry in the 2007 DARPA Urban Challenge. While conducting the pre-challenge testing campaign, Alice faced scenarios where it failed to accomplish its objective because of an unforeseen behavior arising from the interaction of various subsystems in that particular situation. 
In this section, we will illustrate a scenario that is loosely based on the real-world scenarios that Alice faced in the DARPA Urban Challenge, which are described in detail in~\cite{dutoitsensing}.

\smallskip
\noindent
\textbf{Alice at Intersection using Propositional Logic}
In this example, the test was set up such that Alice was approaching an intersection with multiple cars already waiting at the intersection as shown in Figure~\ref{fig:alice_4_way}. While Alice was approaching, its sensors detected the other cars in the intersection and commanded Alice to stop and give way to the other cars. The unforeseen circumstance was that the deceleration tilted the LIDARs forward and towards the ground such that Alice lost sight of the other cars momentarily. Once Alice came to a full stop, the line of sight of the LIDARs tilted back up and detected the cars again, but now Alice was under the impression that the cars just arrived, leading to the control system commanding Alice to drive into the intersection and resulting in unsafe behavior.

We model the components in Alice's control architecture as shown in Figure~\ref{fig:alice_intersection_diagram}.
Alice's system consists of three components, the \emph{Perception}, the \emph{Planner} and the \emph{Tracker} (highlighted in the red dashed box). 
The real world is captured in the component labeled \emph{World}, that provides the real-world observations. The \emph{Safety} component evaluates Alice's behavior and returns a safety flag that evaluates to True if Alice's behavior is safe and False otherwise.
For each component in Alice's system, we can define an IO contract that describes the correct component behavior.
The perception component is modeled as follows
\begin{equation*}
\mc{C}_{\text{perception}}= \{I_P, O_P, \mathfrak{a}_{\text{perception}}, \mathfrak{g}_{\text{perception}} \},
\end{equation*}
with the input variables $I_{\text{perception}} = \{ c^i_{T_1}, c^i_{T_2},c^i_{T_3}, \text{poor\_visibility}\}$, the output variables are 
$O_{\text{perception}}=\{c^i_{P_1}, c^i_{P_2},c^i_{P_3}\}$, the assumptions 
$\mathfrak{a}_{\text{perception}} = \{\neg\text{poor\_visibility}\}$, and guarantees
$\mathfrak{g}_{\text{perception}} = \{c^i_{T_1} \Leftrightarrow c^i_{P_1}, c^i_{T_2} \Leftrightarrow c^i_{P_2}, c^i_{T_3} \Leftrightarrow c^i_{P_3}\}$.
The variables $c^i_{T_1}, c^i_{T_2},c^i_{T_3}$ correspond to whether there is a car in the $1^{st}$, $2^{nd}$, and $3^{rd}$ position in the intersection, and $c^i_{P_1}, c^i_{P_2},c^i_{P_3}$ corresponds to the perceived state of the cars in the intersection; the variable $\text{poor\_visibility}$ represents to the visibility conditions.
This contract describes that the perception component guarantees that the cars in the intersection will be detected correctly if there is no poor visibility.
\vspace{-5mm}
\usetikzlibrary{arrows.meta, positioning}
\begin{figure*}
\centering
\begin{minipage}{.33\textwidth}
\hspace{1mm}
    \includegraphics[width=\linewidth,trim={0.0cm 0.0cm -0.0cm 0.0cm}]{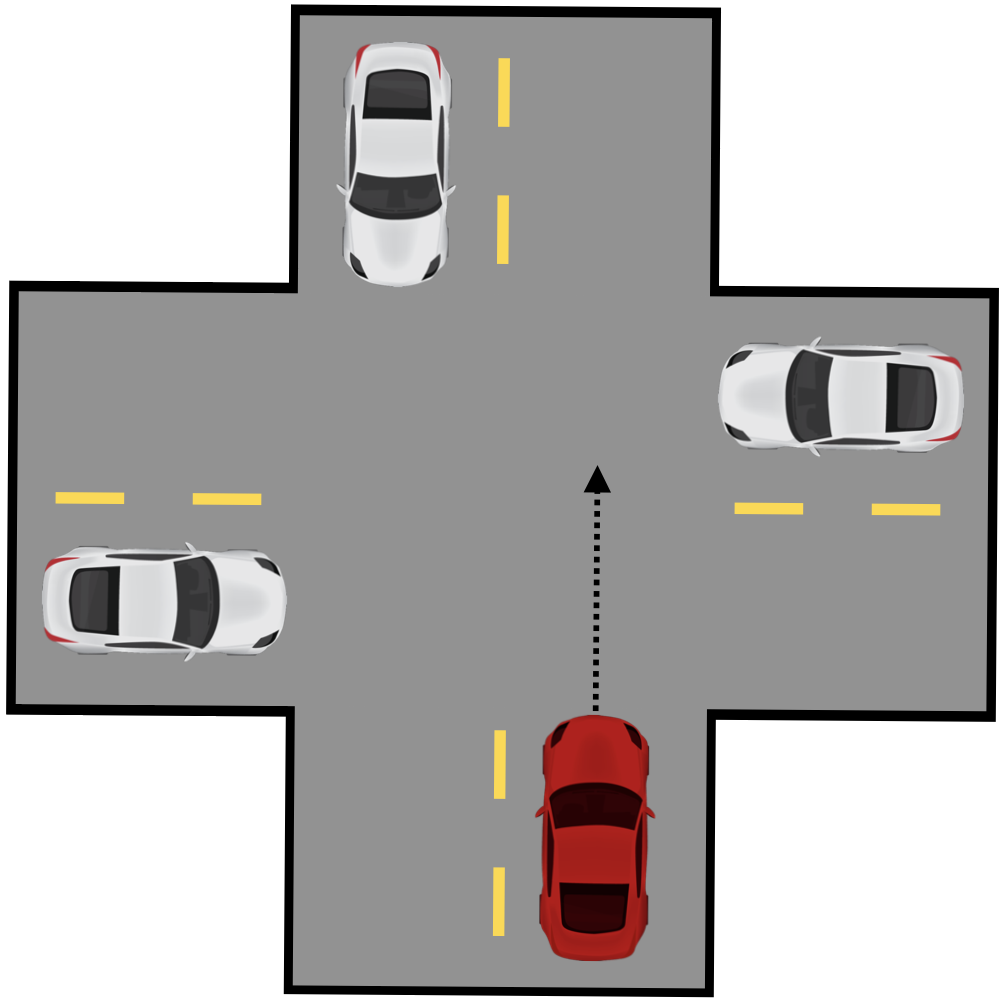}
\subcaption{\label{fig:alice_4_way} Intersection layout.}
  \end{minipage}
  \begin{minipage}{.65\textwidth}
    \centering
\begin{tikzpicture}[->, node distance=0.5cm and 1cm]

\node[draw, minimum width=2.5cm, minimum height=1cm] (perception) {Perception};
\node[draw, below=of perception, minimum width=2.5cm, minimum height=1cm] (planner) {Planner};
\node[draw, below=of planner, minimum width=2.5cm, minimum height=1cm] (tracker) {Tracker};

\draw (perception) -- node[right] {$\mathbf{c}_P^i$} (planner);
\draw (planner) -- node[right] {$\mathbf{q}^i$} (tracker);

\draw ([xshift=-3.0cm]perception.west) -- node[above] {$\mathbf{c}_T^i,\:\text{poor\_visibility}$} (perception.west);
\draw ([xshift=-2cm]planner.west) -- node[above] {$\mathbf{q}^{i-1},\ \mathbf{c}_P^{i-1}$} (planner.west);
\draw ([xshift=-2cm]tracker.west) -- node[above] {icy\_roads} (tracker.west);
\draw (planner.east) -- ++(1.8,0) node[midway, above] {$\mathbf{c}_P^i,\ \mathbf{q}^i$};
\draw (tracker.east) -- ++(1.0,0) node[midway, above] {$v$};

\end{tikzpicture}
    
\subcaption{\label{fig:alice_intersection_diagram} Diagram of Alice's components and interfaces for timestep $i$.}
  \end{minipage}
\vspace{-2mm}
\caption{Layout of the intersection and Alice's component block diagram.}
\label{fig:alice_intersection_failure}
\vspace{-5mm}
\end{figure*}

\noindent
The planner component is tasked with determining Alice's spot in the queue of arriving cars to determine whether Alice has the right of way or needs to stop for other cars to take their turn first.
For this, the Planner needs to keep track of the arrival order of the cars at the intersection.
The planner inputs are 
$$I_{\text{planner}} = \{c_{P_1}^i, c_{P_2}^i, c_{P_3}^i, c_{P_1}^{i-1}, c_{P_2}^{i-1}, c_{P_3}^{i-1},  q^{i-1}_1, q^{i-1}_2, q^{i-1}_3, q^{i-1}_4\},$$
where $c_{P_1}^i, c_{P_2}^i, c_{P_3}^i$ are the detected cars at the intersection at the current timestep $i$, $c_{P_1}^{i-1}, c_{P_2}^{i-1}, c_{P_3}^{i-1}$ are the detected cars in the previous timestep $i-1$, and $q^{i-1}_1, q^{i-1}_2, q^{i-1}_3, q^{i-1}_4$ corresponds to Alice's position in the queue from the previous timestep.
The planner output is given as $O_{\text{Planner}} = \{q_1^i,q_2^i,q_3^i,q_4^i \}$, which corresponds to Alice's updated position in the queue for this timestep.
The Planner contract is given as
\begin{equation*}
\mc{C}_{\text{planner}} = \{I_{\text{planner}},O_{\text{planner}},\mathfrak{a}_{\text{planner}}, \mathfrak{g}_{\text{planner}} \},
\end{equation*}
where the assumptions are $\mathfrak{a}_{\text{planner}} = \{(q^{i-1}_1 \land \neg q^{i-1}_2 \land \neg q^{i-1}_3 \land \neg q^{i-1}_4) \lor  (\neg q^{i-1}_1 \land q^{i-1}_2 \land \neg q^{i-1}_3 \land \neg q^{i-1}_4) \lor (\neg q^{i-1}_1 \land \neg q^{i-1}_2 \land q^{i-1}_3 \land \neg q^{i-1}_4) \lor (\neg q^{i-1}_1 \land \neg q^{i-1}_2 \land \neg q^{i-1}_3 \land q^{i-1}_4)\}$, which ensures that Alice can only be in one position in the queue in the previous timestep.
The guarantees $\mathfrak{g}_{\text{planner}}$ describe how Alice's updated position in the queue will be determined.
\revise{If none of the other cars leave the intersection, Alice will stay in the same position in the queue,
\begin{equation*}
\begin{split}
\mathfrak{g}_0 = &\{( c_{P_1}^i \Leftrightarrow c_{P_1}^{i-1}) \land ( c_{P_2}^i \Leftrightarrow c_{P_2}^{i-1}) \land ( c_{P_3}^i \Leftrightarrow c_{P_3}^{i-1}) \land q_4^{i-1} \Rightarrow q_4^i,\\
&( c_{P_1}^i \Leftrightarrow c_{P_1}^{i-1}) \land ( c_{P_2}^i \Leftrightarrow c_{P_2}^{i-1}) \land ( c_{P_3}^i \Leftrightarrow c_{P_3}^{i-1}) \land q_3^{i-1} \Rightarrow q_3^i,\\
&( c_{P_1}^i \Leftrightarrow c_{P_1}^{i-1}) \land ( c_{P_2}^i \Leftrightarrow c_{P_2}^{i-1}) \land ( c_{P_3}^i \Leftrightarrow c_{P_3}^{i-1}) \land q_2^{i-1} \Rightarrow q_2^i,\\
&( c_{P_1}^i \Leftrightarrow c_{P_1}^{i-1}) \land ( c_{P_2}^i \Leftrightarrow c_{P_2}^{i-1}) \land ( c_{P_3}^i \Leftrightarrow c_{P_3}^{i-1}) \land q_1^{i-1} \Rightarrow q_1^i\}.
\end{split}
\end{equation*}}
\noindent
In the case of a single car leaving the intersection, Alice would advance by one in the queue, described as follows
\begin{equation*}
\begin{split}
\mathfrak{g}_1 =  \{ & ( c_{P_1}^i \Leftrightarrow c_{P_1}^{i-1}) \land ( \lnot c_{P_2}^i \land c_{P_2}^{i-1}) \land ( \lnot c_{P_3}^i \land c_{P_3}^{i-1}) \land q_4^{i-1}\Rightarrow q_3,\\
& ( c_{P_1}^i \Leftrightarrow c_{P_1}^{i-1}) \land ( \lnot c_{P_2}^i \land c_{P_2}^{i-1}) \land (  \lnot c_{P_3}^i \land c_{P_3}^{i-1}) \land q_3^{i-1}\Rightarrow q_2,\\
& ( c_{P_1}^i \Leftrightarrow c_{P_1}^{i-1}) \land ( \lnot c_{P_2}^i \land c_{P_2}^{i-1}) \land ( \lnot c_{P_3}^i \land c_{P_3}^{i-1}) \land q_2^{i-1}\Rightarrow q_1, \\
& (\lnot c_{P_1}^i \land c_{P_1}^{i-1}) \land ( c_{P_2}^i \Leftrightarrow c_{P_2}^{i-1}) \land ( \lnot c_{P_3}^i \land c_{P_3}^{i-1}) \land q_4^{i-1}\Rightarrow q_3,\\
& (\lnot c_{P_1}^i \land c_{P_1}^{i-1}) \land ( c_{P_2}^i \Leftrightarrow c_{P_2}^{i-1}) \land ( \lnot c_{P_3}^i \land c_{P_3}^{i-1}) \land q_3^{i-1}\Rightarrow q_2,\\
& (\lnot c_{P_1}^i \land c_{P_1}^{i-1}) \land ( c_{P_2}^i \Leftrightarrow c_{P_2}^{i-1}) \land ( \lnot c_{P_3}^i \land c_{P_3}^{i-1}) \land q_2^{i-1}\Rightarrow q_1, \\
& (\lnot c_{P_1}^i \land c_{P_1}^{i-1}) \land ( \lnot c_{P_2}^i \land c_{P_2}^{i-1}) \land ( c_{P_3}^i \Leftrightarrow c_{P_3}^{i-1}) \land q_4^{i-1}\Rightarrow q_3,\\
& (\lnot c_{P_1}^i \land c_{P_1}^{i-1}) \land ( \lnot c_{P_2}^i \land c_{P_2}^{i-1}) \land ( c_{P_3}^i \Leftrightarrow c_{P_3}^{i-1}) \land q_3^{i-1}\Rightarrow q_2,\\
& (\lnot c_{P_1}^i \land c_{P_1}^{i-1}) \land ( \lnot c_{P_2}^i \land c_{P_2}^{i-1}) \land ( c_{P_3}^i \Leftrightarrow c_{P_3}^{i-1}) \land q_2^{i-1}\Rightarrow q_1 \}.
\end{split} 
\end{equation*}

\noindent
If any two cars leave the intersection, Alice would advance in the queue by two steps, given by the following
\begin{equation*}
\begin{split}
\mathfrak{g}_2 = & \{(c_{P_1}^i \Leftrightarrow c_{P_1}^{i-1}) \land (\lnot c_{P_2}^i \land c_{P_2}^{i-1}) \land (\lnot c_{P_3}^i \land c_{P_3}^{i-1}) \land q_4^{i-1}\Rightarrow q_3,\\
& (c_{P_1}^i \Leftrightarrow c_{P_1}^{i-1}) \land (\lnot c_{P_2}^i \land c_{P_2}^{i-1}) \land (\lnot c_{P_3}^i \land c_{P_3}^{i-1}) \land q_3^{i-1}\Rightarrow q_1,\\
& (\lnot c_{P_1}^i \land c_{P_1}^{i-1}) \land ( c_{P_2}^i \Leftrightarrow c_{P_2}^{i-1}) \land (\lnot c_{P_3}^i \land c_{P_3}^{i-1}) \land q_4^{i-1}\Rightarrow q_3,\\
& (\lnot c_{P_1}^i \land c_{P_1}^{i-1}) \land ( c_{P_2}^i \Leftrightarrow c_{P_2}^{i-1}) \land (\lnot c_{P_3}^i \land c_{P_3}^{i-1}) \land q_3^{i-1}\Rightarrow q_1,\\
& (\lnot c_{P_1}^i \land c_{P_1}^{i-1}) \land ( \lnot c_{P_2}^i \land c_{P_2}^{i-1}) \land ( c_{P_3}^i \Leftrightarrow c_{P_3}^{i-1}) \land q_4^{i-1}\Rightarrow q_3,\\
& (\lnot c_{P_1}^i \land c_{P_1}^{i-1}) \land ( \lnot c_{P_2}^i \land c_{P_2}^{i-1}) \land ( c_{P_3}^i \Leftrightarrow c_{P_3}^{i-1}) \land q_3^{i-1}\Rightarrow q_1\}.
\end{split}
\end{equation*}
\noindent
If all three cars left the intersection in this timestep, Alice would move from $4^{th}$ position to the $1^{st}$ position in the queue, this is given by
\begin{equation*}
\begin{split}
\mathfrak{g}_3 = \{((\lnot c_{P_1}^i \land c_{P_1}^{i-1}) \land (\lnot c_{P_2}^i \land c_{P_2}^{i-1}) \land (\lnot c_{P_3}^i \land c_{P_3}^{i-1}) \land q_4^{i-1}\Rightarrow q_1) \}.
\end{split}
\end{equation*}
\noindent
Additionally, if Alice is in the first spot, it will stay in this spot until it takes its turn, $\mathfrak{g}_4 = \{q_1^{i-1} \Rightarrow q_1\}$.The Planner guarantees are thus given as 
\vspace{-2mm}
$$\mathfrak{g}_{\text{planner}} = \mathfrak{g}_0 \cup \mathfrak{g}_1 \cup \mathfrak{g}_2 \cup \mathfrak{g}_3 \cup \mathfrak{g}_4.$$
The Tracker component ensures that Alice will only move into the intersection if it has the right of way. The contract is given as
\begin{equation*}
\mc{C}_{\text{tracker}} = \{\{q_1^i, \text{icy\_roads}\},\{v^i \}, \mathfrak{a}_{\text{tracker}}, \mathfrak{g}_{\text{tracker}}  \},
\end{equation*}
with the input variables whether it is Alice's turn ($q_1^i$ provided by the Planner, and the output variable being Alice's speed $v^i$ (where $v$ evaluating to True corresponds to a positive speed).
The assumptions are $\mathfrak{a}_{Tracker} = \{ \neg \text{icy\_roads}\}$, and the guarantees are $\mathfrak{g} = \{q_1^i \Leftrightarrow v^i \}$. This ensures that only when $q_1^i$ is True, Alice will move into the intersection with a positive speed.
\revise{In addition to the above explained component assumptions and guarantees, the three components also contain additional assumptions and guarantees that are unrelated to the process of determining whether it is Alice's turn. These other guarantees represent different viewpoints of the components, and we will show that our framework accurately pinpoints the potential causes within the relevant terms outlined above. For each component, we added $100$ input and output variables, and $100$ component guarantees each, which result in $100$ system observations per timestep (tracker output variables), as shown in gray in Figure~\ref{fig:alice-diagram}. These variables and their interactions correspond to other component functionalities of Alice's system.}
\vspace{-8mm}
\revise{
\usetikzlibrary{positioning, arrows.meta, fit, calc}
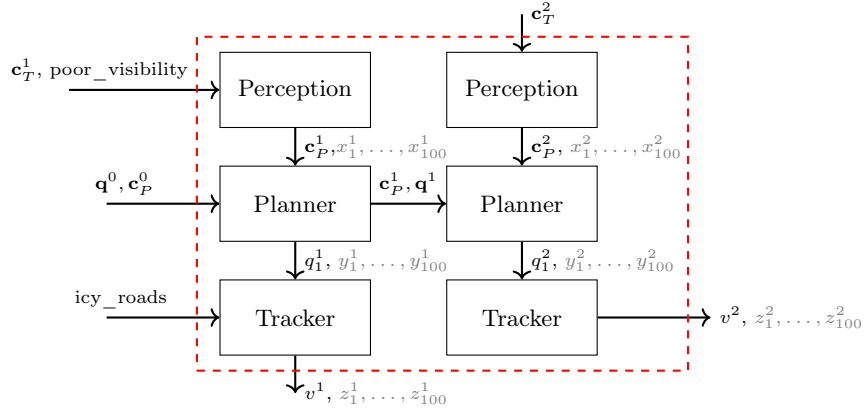
\begin{figure}[ht]
\centering
\begin{tikzpicture}[
    block/.style={draw, minimum width=2cm, minimum height=1cm},
    arrow/.style={->, thick},
    labelstyle/.style={font=\scriptsize},
    node distance=0.5cm and 1cm,
]

\node[block] (perc1) {Perception};
\node[block, below=of perc1] (plan1) {Planner};
\node[block, below=of plan1] (track1) {Tracker};

\node[block, right=of perc1] (perc2) {Perception};
\node[block, below=of perc2] (plan2) {Planner};
\node[block, below=of plan2] (track2) {Tracker};

\draw[arrow] (perc1) -- node[labelstyle, right] {$\mathbf{c}_P^1$,\textcolor{gray}{$x_1^1,\ldots, x_{100}^1$}} (plan1);
\draw[arrow] (plan1) -- node[labelstyle, right] {$q_1^1$, \textcolor{gray}{$y_1^1,\ldots, y_{100}^1$}} (track1);
\draw[arrow] (perc2) -- node[labelstyle, right] {$\mathbf{c}_P^2$, \textcolor{gray}{$x_1^2,\ldots, x_{100}^2$}} (plan2);
\draw[arrow] (plan2) -- node[labelstyle, right] {${q}_1^2$, \textcolor{gray}{$y_1^2,\ldots, y_{100}^2$}} (track2);

\draw[arrow] (plan1) -- node[labelstyle, above] {$\mathbf{c}_P^1, \mathbf{q}^1$} (plan2);

\draw[arrow] ([xshift=-2cm]perc1.west) -- node[labelstyle, above left, xshift=20pt] {$\mathbf{c}_T^1$, poor\_visibility} (perc1.west);
\draw[arrow] ([xshift=-1.5cm]plan1.west) -- node[labelstyle, above left] {$\mathbf{q}^0, \mathbf{c}_P^0$} (plan1.west);
\draw[arrow] ([xshift=-1.5cm]track1.west) -- node[labelstyle, above left, xshift=5pt] {icy\_roads} (track1.west);
\draw[arrow] ([yshift=0.5cm]perc2.north) -- node[labelstyle, anchor=south west] {$\mathbf{c}_T^2$} (perc2.north);

\draw[arrow] (track1.south) -- ++(0,-0.5) node[labelstyle, right] {$v^1$, \textcolor{gray}{$z_1^1,\ldots, z_{100}^1$}};
\draw[arrow] (track2.east) -- ++(1.5,0) node[labelstyle, right] {$v^2$, \textcolor{gray}{$z_1^2,\ldots, z_{100}^2$}};

\path let 
  \p1 = (perc1.north west), 
  \p2 = (track2.south east)
in
  coordinate (boxNW) at (\x1 - 0.3cm, \y1 + 0.2cm)
  coordinate (boxSE) at (\x2 + 1.2cm, \y2 - 0.2cm);

\node[
  draw=red, 
  dashed, 
  thick, 
  fit=(boxNW)(boxSE), 
  inner sep=0pt,
  overlay
] {};
\end{tikzpicture}
\vspace{-2mm}
\caption{Two-step execution of a perception–planner–tracker pipeline with intermediate signals. The bold variables denote vectors of the indicator variables.}
\label{fig:alice-diagram}
\vspace{-7mm}
\end{figure}
\noindent
Next, we compose Alice's system for one timestep by composing the contracts in their provided composition order $\mathtt{CompOrd} = [\mc{C}_{\text{perception}}, \mc{C}_{\text{planner}}, \mc{C}_{\text{tracker}}]$. We can then compose the system for a sequence of time steps, which allows us to evaluate the system behavior using the system-level variables. This composition is shown in Figure~\ref{fig:alice-diagram}. The resulting system contract comprises 212 guarantee terms, each involving a logical statement and some being lengthy and complex.
We are given the test trace as valuations of the variables for a series of time steps shown in Table~\ref{tab:alice_trace}. The trace contains the observations of the other cars in the intersection, and the visibility and road conditions. Alice's system-level output is its speed $v^i$ and the additional tracker output variables (not shown in the table). The internal variables are not accessible from the system level, except for the initial timestep as we require an initial condition as input for our contracts.
\vspace{-8mm}
\begin{table}[ht]
\centering
\caption{System-level violating trace (relevant variables)}
\begin{tabular}{c@{\hskip 1em}ccccccccccccc}
\toprule
\textbf{Time step $i$} & poor\_visibility & icy\_roads & $c_{T_1}^i$ & $c_{T_2}^i$ & $c_{T_3}^i$ & $c_{P_1}^i$ & $c_{P_2}^i$ & $c_{P_3}^i$ & $q_1^i$ & $q_2^i$ & $q_3^i$ & $q_4^i$ & $v^i$ \\
\midrule
0 & 0 & 0 & 1 & 1 & 1 & 1 & 1 & 1 & 0 & 0 & 0 & 1 & 0 \\
1 & 0 & 0 & 1 & 1 & 1 & -- & -- & -- & -- & -- & -- & -- & 0 \\
2 & 0 & 0 & 1 & 1 & 1 & -- & -- & -- & -- & -- & -- & -- & 1 \\
\bottomrule
\end{tabular}
\label{tab:alice_trace}
\vspace{-5mm}
\end{table}
\noindent
We can now analyze the given trace and we notice that one of the 212 system-level guarantees are violated. We use the diagnostics process to track this guarantee to the possible component causes, which requires us to check $50$ component-level statements (out of 656 total). Using this framework, we only have to check $7.6$\% of the component level terms. This now guides the the debugging process to access the relevant internal variables to evaluate the component-level statements and we will identify that the perception component did not provide its guarantees and did not detect the cars in time step $1$. For implementation details, please refer to the notebook in this repository\footnote{\url{https://github.com/pacti-org/cs-diagnostics/tree/main}}.

}

\vspace{-3mm}
\section{Conclusion and Future Work}
\vspace{-2mm}
In this paper, we proposed a diagnostics methodology based on assume-guarantee contracts using Pacti, a tool for compositional system analysis and design. We formally characterized when a system-level guarantee failure can be traced to a component, defined the required structure for contract composition, and identified the information that must be stored to enable diagnostics.

The diagnostics framework involves composing the system, generating a diagnostics map, and systematically checking component-level guarantees and assumptions using log data. We show that if the log contains valuations of the full system state, the method reliably identifies the responsible component. By focusing on the most relevant contract terms first, the approach often reduces the number of evaluated statements. In the worst case, all component-level contracts must be checked—but the framework prioritizes likely culprits and can significantly reduce effort in practice.

We demonstrated the methodology on abstract examples and two simplified scenarios inspired by real-world autonomous system tests. As future work, we plan to explore connections to behavior explainability in robotics. While our current method uses the contract composition operator, a similar approach could be developed using the contract quotient to identify missing components and trace the origin of contract terms, offering additional insight into system behavior.

%

\begin{credits}
\subsubsection{\ackname} This work was funded
by the Air Force Office of Scientific Research (grant number FA9550-22-1-0333).

\end{credits}
%
%
%
\bibliographystyle{splncs04}
\bibliography{main}
\end{document}